\theoremstyle{plain}
\newtheorem{theorem}	 			{Theorem}
\newtheorem{lemma}		[theorem]	{Lemma}
\theoremstyle{definition}
\newtheorem{definition}[theorem]{Definition} 
\theoremstyle{remark} 
\newcommand{\E}{{\bf E}}
\newcommand{\opt}{\textsc{opt}}
\newcommand{\pmarker}{\textsc{PredictiveMarker}}
 \author{Michael Mitzenmacher\footnote{School of Engineering \& Applied Sciences, Harvard University, Cambridge, Massachusetts, USA.  \texttt{michaelm{@}eecs.harvard.edu}. \texttt{http://www.eecs.harvard.edu/\textasciitilde michaelm}. Supported by NSF grants CCF-1563710 and CCF-1535795.} \and Sergei Vassilvitskii\footnote{Google Research, New York, New York, USA. \texttt{sergeiv{@}google.com}. \texttt{http://theory.stanford.edu/\textasciitilde sergeiv}}}
\title{Algorithms with Predictions\footnote{This survey is to appear as a chapter in Beyond the Worst-Case Analysis of Algorithms, a collection edited by Tim Roughgarden.  We hope to occasionally update the survey here, with new versions that include discussions of new results and advances in the area of Algorithms with Predictions.  }
}\label{chap:awp}
\begin{document}

\maketitle

\abstract{We introduce algorithms that use predictions from machine learning applied to the 
 input to circumvent worst-case analysis. We aim for algorithms that have near optimal 
 performance when these predictions are good, but recover the prediction-less worst case behavior when the predictions have large errors. }

\section{Introduction}
In finding ways to go beyond worst case analysis, previous chapters have described different ways to model the inputs seen by an algorithm in order to avoid fragile bad examples, give better guarantees, or explain the efficacy of methods in practice.  Many of these approaches are based on assuming a model of the input that includes randomness in a very specific way. For instance, in average case analysis %(Chapter 8) 
data is drawn from a fixed but unknown distribution, and with random arrival models %(Chapter 11) 
the input is assumed to be randomly permuted.  In this chapter, instead of posing a specific model or a set of assumptions on the input, we provide a general framework designed to make use of the rapidly growing power of machine learning techniques.  In our framework, we assume that we have a machine learning method that provides us with predictions about the input, and we use the prediction to make a more effective algorithm.  We then analyze the performance of the algorithm as a function of how accurate the prediction is;  ideally, the better the prediction, the better the performance. 

One thing that distinguishes this approach from other work is its natural connection to practice, as for many problems machine learning can be readily applied to data to provide the necessary prediction for new inputs.  Moreover, if we can successfully tie the performance of an algorithm to the quality of the predictions it receives, then as machine learning technology evolves and the quality of predictions improves, we get better performing algorithms essentially for free. 

When designing these kinds of algorithms with predictions, there are several new challenges.
One is a new goal for our theoretical analysis.   We wish  to provide formal guarantees of the following form:  if our predictor has a given level of performance, our algorithm will achieve a corresponding level of performance.  %Correspondingly, 
A further challenge is to identify what quantity or quantities to predict, as these will generally be problem specific.  Choosing the right quantity to predict can affect both the algorithm performance and the bounds from our analysis.  Finally, an additional challenge is that by nature machine learning methods are imperfect.  They have errors that can be large and surprising, and the algorithms we design using machine learning predictions should be robust  enough to cope with them. 

%A relatively new but potentially very practical approach for developing algorithms is to use machine learning techniques to provide predictions that help guide decisions made by the algorithms.  The approach is so new that it does not yet have a common name;  it is sometime called {\em learning-augmented algorithms}, or {\em algorithms with predictions}.  From the theoretical standpoint, the goal is to analyze these algorithms by modeling the predictor in some formal way, rather than simply using  machine learning to derive heuristic solutions.  We can imagine it would be possible to give guarantees of the following form:  if our predictor has a given level of performance, our algorithm will achieve a corresponding level of performance.  

%Like other models and methods for beyond worst case analysis, this approach makes additional assumptions on the inputs.  However, here the assumptions are fairly general.  Generally, the main thing we ask for is that a predictor has small error. Several problems that have been studied in this framework suggest that such assumptions are realistic and often can be empirically verified.   

We start with some very simple examples suggesting why this framework might be useful, and then present some additional examples of more complicated algorithms and data structures that make use of predictions.  

\subsection{Warm-up: Binary search}
%% Copied from paper. 
\label{sec:example}
As a first example, consider the binary search problem. Given a sorted array $A$ on $n$ elements and a query element $q$, the goal is to either find the index of $q$ in the array, or state that it is not in the set. The textbook method is binary search: compare the value of $q$ to the value of the middle element of $A$, and recurse on the correct half of the array. After $O(\log n)$ probes, the method either finds $q$ or correctly returns that $q$ is not in the array. 

Binary search optimizes for the worst case, but there are often times when we can do better. For example, most bookstores have books arranged alphabetically by the authors' last name within a particular section. If we were looking for an Agatha Christie mystery, we would likely start our search near the beginning of the section; if, instead, we were to look for a Dorothy Sayers novel, we'd start further towards the end. We first look at the approximate location where we expect to find the book, using our knowledge of the alphabet. 

How can we generalize this approach? Let us assume we have a predictor $h$ which, for every query, $q$,  returns our best guess for the position of $q$ in the array.  To use $h$, a natural approach is to first probe the location at $h(q)$; if $q$ is not found there, we immediately know whether it is smaller or larger. Suppose $q$ is larger than the element in $A[h(q)]$ and the array is sorted in increasing order. We probe elements at $h(q) + 2, h(q) + 4, h(q) + 8$, and so on, until we find an element larger than $q$ (or we hit the end of the array). Then we apply binary search on the interval that's guaranteed to contain $q$ (if it exists). The bookstore example uses interpolation search as a classifier; since 'C' is the third letter out of 26, we start our search for the Agatha Christie book about $\nicefrac{3}{26} \approx 12\%$ of the  way through the Mysteries section. 

What is the cost of such an approach, in terms of the number of comparisons? Let $t(q)$ be the true position of $q$ in the array (or the position of the largest element smaller than $q$ if it is not in the array). Suppose the error of the classifier on $q$ is $\eta_q = |h(q) - t(q)|$. The cost of running the above algorithm starting at $h(q)$ is at most $2(\log{ |h(q)-t(q)}|) = 2 \log \eta_q$. 

If the queries $q$ come from a distribution, then the expected cost of the algorithm is:
$$2 \mathbb{E}_q \Big[\log \left(|h(q) - t(q)|\right)\Big] \leq 2 \log \mathbb{E}_q \Big[|h(q) - t(q)|\Big] = 2 \log \mathbb{E}_q[\eta_q],$$
where the inequality follows by Jensen's inequality. This gives a guarantee on the performance of the algorithm parametrized by the error of the predictor.
In particular, even a classifier with an average error of $O(\text{poly}\!\log n)$ leads to an improvement in asymptotic performance. Moreover, since $\eta_q$ is trivially bounded by $n$, even an exceptionally bad predictor cannot do much harm.  

\subsection{Online algorithms: Ski Rental}
The above example has the nice property that the use of predictions is essentially free. On the one hand, as the prediction error tends to zero, the running time approaches the best possible for this task (a constant). On the other hand, the error is naturally bounded by the number of elements, so even bad predictions will not asymptotically degrade the algorithm's performance. In other situations there can be a more dramatic trade-off between the benefit of using the predictions and the cost incurred when these predictions are wildly incorrect. 

Consider the \textsc{SkiRental} problem. At the beginning of the ski season, a new skier has the option to buy skis for \$b dollars, or to rent them every day for \$1 per day. This is one of the simplest settings of decision making under uncertainty --- the skier does not know how many days she will ski, yet a simple deterministic strategy will guarantee that she does not spend more than twice as much as she would have had she known the future.

The algorithm achieving that bound rents skis for the first $b$ days, and then buys them on day $b+1$. If the skier skis $b$ or fewer days, she has spent the optimal amount. If, by chance, she stops skiing after day $b+1$, she's spent at most $\$2b$ in total, which is less than twice the optimal amount. 

Suppose the skier has access to a prediction $h(d)$ of how many days she will ski. How should she use this information? Let $d^*$ be the true number of skiing days, and $\eta = |h(d) - d^*|$ be the error in the prediction. It is easy to verify that the algorithm that treats the prediction as truth (i.e. buying skis on day 1 if $h(d) > b$ and renting daily otherwise) has a total cost of $OPT + \eta$. We observe that in this case, the use of predictions is not ``free.'' While the algorithm performs optimally when the prediction is correct, if the skier trusts the prediction and doesn't buy the skis when she should, she can spend arbitrarily more money than if she applies the simple deterministic strategy above. 

There is, however, a simple fix. Let $\lambda \in [0, 1]$ be a tunable parameter, and consider the following algorithm. If $h(d) > b$, the skier buys on day $\lceil \lambda b \rceil$, and otherwise, she buys on day $\lceil \nicefrac{ b}{\lambda} \rceil$. A case analysis shows that the competitive ratio of this algorithm is bounded by:
\begin{equation}
 1 + \min\Big(\frac{1}{\lambda}, \lambda + \frac{\eta}{(1 - \lambda) OPT }\Big).
\label{eqn:skirental}
 \end{equation}
In particular, as the error of the prediction drops to 0, the competitive ratio is no more than $1 + \lambda$. On the other hand, even for large errors, the ratio is never worse than $1 + \nicefrac{1}{\lambda}$.  Note that $\lambda = 1$ recovers the algorithm we described originally.  

\subsection{Model}
The two examples above outline the desiderata that we have for algorithms that use predictions. There are three properties that we highlight. 

First, we have isolated the inner workings of the predictor from the algorithm that uses the predictions, instead simply abstracting the predictor as a function $h$. Our algorithms are accordingly not tied to a specific type of predictor. We can apply decision trees, neural networks, or any other approach to obtain predictions; any $h$ with low error suffices. 

Second, the goal is to tie the performance of the algorithms to the observed loss of the predictor. In the setting of our examples, where we  used competitive analysis, we further isolated two concepts. We want the algorithms to be  {\em consistent}; 
%{\bf MM:  This concept of monotonicity doesn't really seem related to the notion of $\alpha$-consistent.  Can we remove monotonicity here?  I mean, I can imagine saying that monotonicity might be nice, but I  think that's a separate property than what's being discussed here. }
that is, ideally their performance should recover that guaranteed by the offline optimal algorithm given an error-free prediction. Additionally, to capture the fact that machine learning systems sometimes have very large errors, we want algorithms to be {\em robust}; that is, ideally their performance should not be worse than standard online algorithms that use no predictions whatsoever. 

While ideal consistency and robustness may be quite challenging, we can loosen the goals using an approximation.  Formally, we say that an algorithm is {\em $\alpha$-consistent} if its competitive ratio tends to $\alpha$ as the error in the predictions goes to $0$, and {\em $\beta$-robust} if the competitive ratio is bounded by $\beta$ even with arbitrarily bad predictions. 

As we saw in the ski-rental example, there is often a tension between consistency and robustness. A practitioner who has high confidence in the predictions may aim for high consistency and low robustness by choosing a small value of $\lambda$. On the other hand, a risk-averse decision maker may choose a higher value for $\lambda$, limiting the benefit of the predictions but also the additional cost when they turn out to be incorrect. 

\section{Counting Sketches}

Another example of a problem where predictions have been shown to boost performance is in the setting of counting sketches for data streams.  We briefly describe the Count-Min sketch as an example of a counting sketch.  For simplicity, we assume items come in as a data stream one at a time;  for example, these could be URLs or IP addresses being accessed. Keeping a separate counter for each item may require too much space, and so we can instead use a sketch that requires less memory at the cost of obtaining only an approximate count for each item, with some chance of failure for each item.  The Count-Min sketch sets up a rectangular array of counters with $r$ rows and $c$ columns.  Each item hashes to one counter location in each row;  when an item passes in the stream, each of its counters is incremented.  The approximate count for an item is the minimum counter value associated with the item, which can only yield an overestimate of the actual count for the item.  Various results are known that show the error for such a sketch can be small for appropriate values of $r$ and $c$.  Note that if an item has at least one counter where no other item hashes to it, the resulting approximate count will in fact be the exact count. The idea behind the good performance of the Count-Min sketch is that for most items, there will be at least one counter for which the item collides with very few other items, leading to an accurate estimate.  In particular, for skewed data streams where item frequencies follow a Zipfian distribution (or more generally for heavy-tailed distributions), so much of the total count is based on a small number of items that the approach can be very accurate, as most collisions introduce only a small error in the counter.

Suppose, however, that we had a predictor that could reasonably accurately predict which items were the ``heavy hitters,'' that is the most frequent items.  Since the idea of using a data sketch is to save space, we do not want to use a separate counter for every item, but we may be willing to use space to keep individual counters for each item that is predicted to have a high count.  This assures accuracy for correctly predicted heavy hitters, which is often important, but also importantly it greatly reduces the possibility of a large error for an item with a small count, since removing potential heavy hitters from the larger array greatly reduces the possibility that a small item will have all of its counters collide with a large item.  

The works \citep{hsu2018learning,aamand2019learned} have formalized this high-level argument with provable results for the Count-Min sketch and Count-Sketch for Zipfian frequency distributions, showing they can improve the space/performance tradeoff over sketches without predictions.  They also show this improvement holds in practice.  While we do not go into further details here, the example of counting sketches provides an intuitive approach for using predictions within algorithms and data structures:  if there are a limited set of problematic elements, such as outliers or high weight elements, that greatly effect performance when they are not known in advance, a predictor may allow these elements to be separated out and correspondingly improve overall performance.     

\section{Learned Bloom Filters}

An early proposed example of how predictions from machine learning could improve data structures provided a novel variation of the Bloom filter \citep{kraska2018case}.  

To start, let us briefly review standard Bloom filters \citep{bloom1970space,broder2004network}, a data structure used to answer set membership queries using small space.  
A Bloom filter for representing a set $S = \{x_1,x_2,\ldots,x_n\}$ of $n$ elements corresponds to an array of $m$ bits, and uses $k$ independent hash functions $h_1,\ldots,h_k$ with range $\{0,\ldots,m-1\}$. 
Note that the number of bits per item used by the Bloom filter is given by $m/n$.
Here we assume that these hash functions are perfectly random.  
Initially all array bits are 0.  For each element $x \in S$, the array
bits $h_i(x)$ are set to 1 for $1 \leq i \leq k$; a bit may be repeatedly set to 1.  To check if an item $y$ is in
$S$, we check whether all $h_i(y)$ are set to 1.  If not, then clearly
$y$ is not a member of $S$.  If all $h_i(y)$ are set to 1, we conclude
that $y$ is in $S$, although this may be a {\em false positive}.  A
Bloom filter does not produce false negatives.

Let $y$ be an element such that $y \notin S$, where $y$ is chosen independently of the hash functions used to create the filter.
Let $\rho$ be the fraction of bits set to 1 after the elements are hashed. Then the probability of a false positive is $\rho^k$.   
Now the expected value of $\rho$ is easily calculated, as the probability a specific bit in the filter stays 0 is just
\begin{align*}
\left( 1 - \frac{1}{m} \right )^{kn} \approx e^{-kn/m}.
\end{align*}
Standard techniques show the $\rho$ is close to its expectation with high probability, so using the expectation in place of $\rho$, we see the false positive probability will be concentrated near
\begin{align*}
(1-e^{-kn/m})^k
\end{align*}
when $k$ and $m/n$ are constant.
Choosing $k$ appropriately (the optimal value for $k$ is $(m/n)\cdot \ln 2$), We see the false positive probability for an element falls exponentially with $m/n$, the number of bits per item used in the filter.

The idea of a learned Bloom filter is to train a neural network or other machine learning algorithm to recognize the set $S$.  We represent the algorithm by a function $f$, so that on input $x$ the algorithm returns a value $f(x)$ between 0 and 1. 
The algorithm ideally would return 1 for every element in the set and 0 for every element not in the set.  If we had such a predictor, we would not need any data structure, as we could just use the function to represent the set.  This is too much to expect in practice;  instead, we consider  an algorithm that returns a value $0 \leq f(x) \leq 1$.
We might intuitively interpret $f(x)$ as an estimate of the probability that $x$ is an element from the set, although this interpretation is not necessary in what follows.

We can choose a threshold $\tau$, and have the algorithm return that any element that satisfies $f(x) \geq \tau$ is in the set and otherwise it is not in the set.  Indeed, if we choose $\tau = \min_{x \in S} f(x)$ then there will be no false negatives.  But unless the predictor $f$ is very good, it is likely that this value of $\tau$ will lead to too many false positives.  

\begin{figure*}[t]
        \centering
        \includegraphics[width=0.8\textwidth]{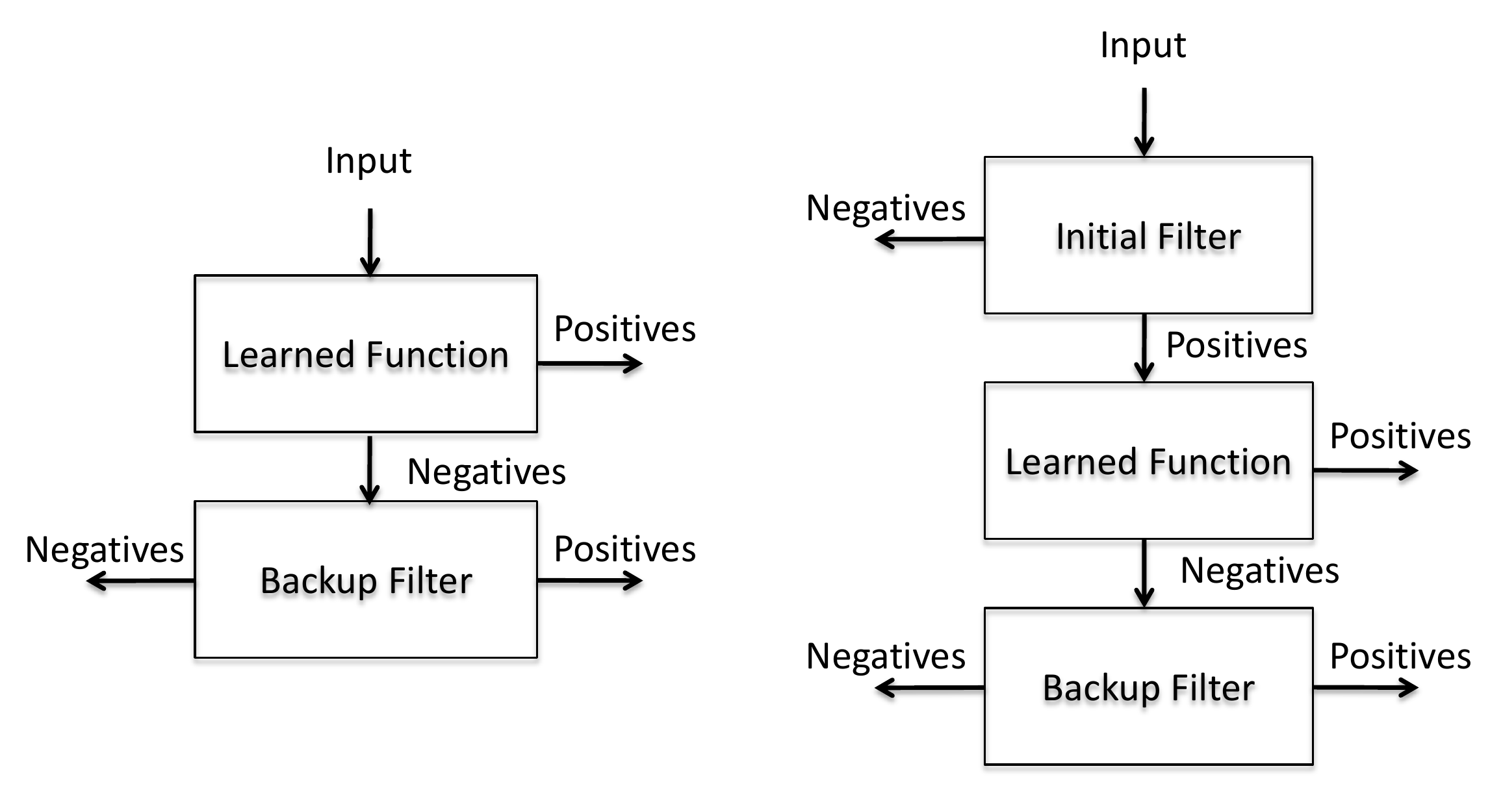}
                \caption{The left side shows a learned Bloom filter.  Negatives from the learned function are checked against the backup filter to prevent false negatives. The right side shows the sandwiched learned Bloom filter.  An initial filter removes many true negatives from reaching the learned function, reducing the false positives from the learned function.}
                \label{fig:diagram}
\end{figure*}

The alternative approach we apply is to use the learned function $f$ as a prefilter, selecting a larger value of $\tau$ to cut down on false positives, and then using a standard Bloom filter as a backup to prevent false negatives.  The setup is shown in Figure~\ref{fig:diagram}.  The initial learned function should correctly identify a substantial number of set elements, with a low false positive rate.  The backup Bloom filter then holds all the set elements that are incorrectly rejected by the learned function;  to be clear, we determine these in advance and set up the backup Bloom filter accordingly, which means the data set and the learned function must be fixed before setting up the back Bloom filter.  The backup Bloom filter again yields false positives, but prevents any false negatives.  

To see how there might be gains from this approach, imagine a small learned function that correctly identifies half of the original set.  Then the backup Bloom filter needs to only correct the erroneous false negatives of the predictor, which means the backup filter needs to represent only half the original set elements. Accordingly, the backup filter could be roughly half the size of a Bloom filter for the entire set with roughly the same false positive rate.  If the learned function has a small enough representation, namely less than half the size of a Bloom filter for the entire set, then this combination will be a win in terms of the space versus false positive probability tradeoff against a standard Bloom filter.  Empirical results from \citet{kraska2018case} show that learned Bloom filters can outperform standard Bloom filters for real-world data sets.    

We emphasize that the threshold $\tau$ will typically be chosen empirically, based on test queries, to predict the rate of false positives that will occur.  This empirical evaluation of test data to determine the relationship between $\tau$ and the rate of false positives we expect to find in future queries depends on our test queries being representative of the future;  otherwise we may obtain a higher false positive rate over future queries than expected.  A learned Bloom filter thus requires different additional assumptions than a standard Bloom filter in order to make statements about its performance.  Further details are discussed in \citet{mitzenmacher2018model}.  

Learned Bloom filters are relatively new;  given the large number of variations of Bloom filters, there may be interesting improvements for and variations of learned Bloom filters that will appear.  Indeed, it is already known that a ``sandwiched" learned Bloom filter that uses a learned filter between two standard Bloom filters, also shown in Figure~\ref{fig:diagram}, can yield better performance \citep{mitzenmacher2018model}.

\section{Caching with Predictions}
The caching or paging problem is both a canonical example of online algorithms, and a problem which has necessitated beyond worst-case analysis. %, as we have seen in Chapters 1, 2, and 24. 

Recall the problem setup. We are given a machine with a slow memory that can hold $N$ pages, and a fast memory with $k$ slots. Page requests arrive one at a time, and must be served out of fast memory. If the page is already in fast memory (cache), then a {\em hit} occurs, and the operation is free. Otherwise, there is a {\em cache miss}, the page is brought into fast memory, and one of the $k$ existing pages in the cache is evicted. The goal is to minimize the number of cache misses over the sequence of page requests. 

\subsection{What to Predict?}
The first question to address is to decide on the quantity that should be predicted by the machine learning subsystem. We look for predictions that are both useful to the algorithm and efficiently learnable.  The latter highlights the fact that predictions should be grounded in reality. Specifically, we want to make sure that we only need polynomially many examples to learn a good predictor; formally, we ensure that the function has a low sample complexity. As long as the family of functions specifying the predictor is relatively simple and well behaved, this condition is satisfied. However, an approach to fully predict the whole instance would fail the test and be untenable. 

What are good candidates for predictions for the paging problem? %As mentioned in Chapter 1, 
One algorithm that minimizes the number of cache misses is the Furthest-In-Future (FIF) algorithm, also known as B\'{e}l\'{a}dy's rule~\citep{belady1966study}. This method always evicts the element that is latest to come back. In order to be able to emulate it online, a useful prediction to be made at the time of each request is the {\em next} arrival time of this element. Formally, let $next(t)$ be the next arrival time of the element that appeared at time $t$, and $h(t)$ denote the predicted time of the next arrival of this element. %We will refer to $h$ as the {\em hypothesis}. 

Armed with such a predictor a natural approach is to plug it into the Furthest-In-Future algorithm, instead of the ground truth. We call this the PFIF for Predicted Furthest-In-Future. 

The analysis of the FIF algorithm directly implies that if the predictor $h$ is perfect, that is $h(t) = next(t)$ for all $t$, then PFIF is optimal. In other words, PFIF is consistent. But is the approach robust? 

First we must define an error metric. For a hypothesis $h$ let us define
$\eta(h) = \sum_t |h(t) - next(t)|.$ The question we want to ask is how the competitive ratio of PFIF scales with $\eta(h)$. The first thing to observe is that, as we defined it, the error grows with the input length. This is undesirable. Suppose we duplicate a request sequence and the predictions. The competitive ratio would remain the same, but the error defined above would double. We can normalize by the input length, but this, too, leads to pathological cases. For instance, take any request sequence of length $n$, and repeat the last element $n$ times. Since all of these extra requests would be cache hits, the performance of any algorithm remains the same as well. However, if the last $n$ predictions are perfect, then $\eta$ does not change, but error normalized by sequence length would decrease by a factor of two. Instead, we will normalize the error by the cost of the optimum solution \opt, which behaves correctly in both of these examples. 

We show that the competitive ratio of PFIF grows linearly with the error. Formally, the competitive ratio of PFIF is $\Omega(\eta(h)/\opt)$.

Consider a simple example with a cache of size 2, and three elements, $a$, $b$, and $c$. The true sequence will be $c,a,b,a,b,\ldots,a,b,c$. The predictions will be correct for elements $a$ and $b$, but the prediction for $c$ will always be at time $0$. Hence $\eta(h)$ is the length of the sequence.  In this case PFIF will keep $c$ in the cache, and suffer a cache miss almost every time. On the other hand, the optimal solution never misses on $a$ and $b$ once they are in the cache, and has a constant number of misses overall. We note that while it may be tempting to attempt to fix this algorithm by disregarding elements whose predicted appearance time has passed, this also has an $\Omega(\eta(h)/\opt)$ competitive ratio. 

\subsection{Marking Algorithms}
A natural question then is whether we can get competitive ratios with a more benign dependence on $\eta(h)/\opt$. 

To proceed we introduce the Marking family of algorithms, first introduced by~\citet{FiatKLMSY91}. These algorithms proceed in phases. Every phase begins with every cache position ``unmarked.'' Whenever there is a cache miss, an unmarked element is evicted, and the new element is marked. When a cache hit occurs the element is marked as well. This continues until all elements in the cache are marked, at which point the phase ends, and all of the marks are cleared. It is easy to show that any Marking algorithm is $O(k)$-competitive, where $k$ is the cache size.  Moreover,~\citet{FiatKLMSY91} show that if an algorithm evicts a uniformly random unmarked element, then the expected competitive ratio is $O(\log k)$. 

To prove a bound on the competitive ratio of the marking algorithm, we must get a lower bound on the optimum. To do so, we partition elements that arrive during a phase into two categories: clean and stale. {\em Clean} elements in phase $i$ are those that did not appear in phase $i-1$. In contrast, {\em stale} elements are those that were seen in the previous phase. Consider the following sequence with a cache of size $3$. 

$$\underbrace{a,a,b,a,b,c,}_{phase 1}\underbrace{b,b,c,b,d,}_{phase 2}\underbrace{a,a,d,c}_{phase 3}$$

\noindent Note that each phase ends as soon as three distinct elements appear. In phase 2, elements $b$ and $c$ are stale (since they appeared in phase 1), and element $d$ is clean. In contrast, in phase 3, $d$ is stale (as is $c$), and $a$ is clean. 

Let $C_i$ be the number of clean elements in phase $i$. Consider the performance of any algorithm on the clean elements. For some element $j \in C_i$, if it is not present in the cache in the beginning of phase $i$, then it will incur a cache miss. On the other hand, if it is in the cache at the beginning of the phase, it must have stayed in the cache throughout phase $i-1$, even though it did not appear, thus effectively reducing the working cache size. This argument can be made precise, to show that 
\begin{equation}
\label{eqn:clean}
OPT \geq \frac{1}{2} \sum_i C_i.
\end{equation}
In other words, the number of misses in any strategy is at least half the number of all clean elements.  We will relate the misses suffered by our algorithm to the number of clean elements in each phase. 

% \paragraph{\pmarker}
In order to utilize predictions in the marking framework, we modify the eviction strategy of the marking algorithm. If the arriving element is clean, we evict the unmarked element {\em predicted} to appear furthest in the future. If the arriving element is stale, we proceed as before, and evict a uniformly random unmarked element.  We refer to this variant 
as {\pmarker}.

\begin{theorem}
\pmarker\ has a competitive ratio of $O\big(\log \frac{\eta(h)}{\opt}\big)$.
\end{theorem}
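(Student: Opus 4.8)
The plan is to follow the classical analysis of the randomized Marking algorithm of \citet{FiatKLMSY91}, but to charge the cost incurred on clean arrivals more carefully, separating the contribution of \emph{good} predictions from \emph{bad} ones. As in the standard argument, I would fix a phase $i$, let $C_i$ be the number of clean elements arriving in that phase, and bound the expected number of misses in phase $i$ by $C_i$ (for the misses caused directly by clean elements) plus the expected number of misses on stale elements. The stale-element misses are the crux: a stale element $e$ causes a miss only if it was evicted earlier in the phase, and the probability of this is governed by how many unmarked slots remain when $e$ is re-requested relative to how many stale elements are still ``at risk.'' In the pure randomized Marking analysis this yields a harmonic sum $H_k$ and hence an $O(\log k)\cdot\sum_i C_i$ bound, which combined with \eqref{eqn:clean} gives the $O(\log k)$ competitive ratio.

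The key modification to exploit is that on clean arrivals we evict by \emph{predicted} furthest-in-future rather than uniformly at random. First I would set up, per phase, a ``charging budget'' that counts, for each clean arrival, the number of stale elements it displaces; the total such displacement over the phase is what drives the miss count. The point is that if predictions were perfect, evicting the predicted-furthest unmarked element coincides with Bélády's rule restricted to the unmarked set, so a displaced stale element would not be re-requested within the phase at all, killing the harmonic blow-up. With imperfect predictions, a clean eviction can be ``wrong'' — it evicts an element that does come back soon — and I would argue that each such wrong eviction can be blamed on a pair of elements $(e,e')$ whose predicted order disagrees with their true order, i.e.\ on a prediction error. Summing $|h(t)-next(t)|$ over the relevant times lower-bounds the number of such inversions, so the total number of ``bad'' clean evictions across all phases is $O(\eta(h))$.

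From here I would combine the pieces. Let $b$ denote the total number of bad clean evictions; the analysis splits the expected algorithm cost into (i) a term $O(\sum_i C_i)$ from clean elements and correctly-handled stale elements, which is $O(\opt)$ by \eqref{eqn:clean}, and (ii) a term coming from the stale elements that were evicted by mistake, which behaves like the randomized-Marking harmonic term but with the ``cache size'' replaced by the number of bad evictions — contributing roughly $b \log(b)$ or, after normalizing, $\opt \cdot \log(\eta(h)/\opt)$. Putting these together gives an expected cost of $O\bigl(\opt + \opt\log(\eta(h)/\opt)\bigr) = O\bigl(\opt\cdot\log(\eta(h)/\opt)\bigr)$, i.e.\ the claimed competitive ratio; one also checks the bound degrades gracefully to $O(\log k)$ when $\eta(h)/\opt$ is large, recovering robustness.

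The main obstacle I anticipate is making precise the claim that bad clean evictions can be injectively charged to prediction-error mass, and then feeding the \emph{count} of bad evictions (rather than the full cache size $k$) into the harmonic/concavity step. Specifically, one must show the per-phase expected stale-miss cost is bounded by something like the number of bad evictions in that phase times a logarithmic factor, and then apply Jensen's inequality (concavity of $\log$, exactly as in the binary-search warm-up) to pass from $\sum_i (\text{bad}_i)\log(\text{bad}_i)$ to $\bigl(\sum_i \text{bad}_i\bigr)\log\bigl(\sum_i \text{bad}_i / (\text{number of phases})\bigr)$ and finally to $\opt\log(\eta(h)/\opt)$, using that the number of phases is itself $\Theta(\opt)$ up to constants. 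Getting the normalization by $\opt$ to emerge cleanly from this double-counting — rather than an unwanted factor of the phase count or of $k$ — is where the real care is needed.
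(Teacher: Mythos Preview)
Your high-level plan is in the right neighborhood---split per phase, charge clean misses to $\opt$ via \eqref{eqn:clean}, and use concavity of $\log$ at the end to turn a sum of logarithms into $\opt\cdot\log(\eta(h)/\opt)$. Two concrete ingredients from the paper are missing from your outline, however, and without them the argument does not close.

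First, the paper does not work with a global count of ``bad clean evictions'' $b$. Instead it introduces an \emph{eviction-chain} decomposition: for a stale miss $e$, look at $ev(e)$, then $ev(ev(e))$, and so on; each such chain terminates at a clean element, so the number of stale misses in a phase equals the total length of the chains rooted at that phase's clean elements. The per-chain bound is then the crux: if the clean arrival evicted $s$ (predicted furthest in future), any stale element re-requested \emph{before} $s$ actually returns cannot extend the chain, so only elements that arrive after $s$ yet were predicted to arrive before $s$ can contribute. Letting $inv_h(s)$ be the number of such order inversions, one extends Lemma~\ref{lem:randommarker} to get chain length $O(\log inv_h(s))$. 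Your ``bad-eviction'' count $b$ and the claim that the stale-miss term looks like ``$b\log b$'' do not line up with this: the quantity that enters the log is the \emph{number of inversions per chain}, not a single aggregate $b$, and the number of summands is $\sum_i C_i=\Theta(\opt)$, not $b$. Getting the right object inside and outside the logarithm is exactly what makes the final Jensen step produce $\opt\log(\eta(h)/\opt)$ rather than $\eta(h)\log\eta(h)$ or some other shape.

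Second, you assert that bad evictions (or inversions) can be charged to $\eta(h)=\sum_t|h(t)-next(t)|$, but you do not say how. The paper invokes the Diaconis--Graham inequality, which states that for any permutation the total number of pairwise inversions and the $\ell_1$ displacement are within a factor of two; this is what converts $\sum_s inv_h(s)$ into $O(\eta(h))$. Without naming (or reproving) this fact, the bridge from ``order inversions'' to the $\ell_1$ error metric is a genuine gap. Once you have both pieces---chain length $O(\log inv_h(s))$ and $\sum_s inv_h(s)=O(\eta(h))$---the concavity step you already describe finishes the proof cleanly.
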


To prove the theorem, let us try to understand the reason behind cache misses incurred by the algorithm. Suppose an element $e$ arrives and $e$ is not in the cache, causing a cache miss. If the element $e$ is clean, Equation \ref{eqn:clean} tells us we can charge its eviction directly to \opt. Suppose $e$ is stale. By the definition of stale elements, $e$ was in the cache when the phase began, thus it must have been evicted at some point between the beginning of the phase and its arrival. Let $ev(e)$ denote the element whose arrival caused the eviction of $e$. Either $ev(e)$ is clean, or it is another stale element, $e_1$, whose arrival time is earlier than $e$. In this case let us look why $e_1$ was evicted, i.e. $ev(e_1) = ev(ev(e))$. By the same logic, either $ev(e_1)$ is a clean element, or it is another stale element whose first arrival in this phase was earlier still. Therefore, repeatedly applying the $ev$ function to an element leads to a clean element whose arrival set off this chain of events. 

To get a bound on the competitive ratio, we ask how long can this chain be? This gives us the desired bound because each link in the chain represents a cache miss, each chain terminates with a clean element, and the number of clean elements is comparable to \opt\ by Equation \ref{eqn:clean}. It is clear that the length of the chain depends on the eviction rule: if we always evict the element that is latest to arrive (FIF) then each chain is of length $1$. If we do the reverse and evict the element that is next to arrive, then a chain can grow to be $\Omega(k)$ in length. 

We first analyze the standard Marking algorithm which evicts elements uniformly at random. 
\begin{lemma}
\label{lem:randommarker}
When evicting a random unmarked element, the expected length of each chain is $O(\log k)$. 
\end{lemma}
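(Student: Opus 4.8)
The plan is to fix one chain and bound its expected number of links: a chain that starts at a clean element and has $\ell$ links accounts for exactly $\ell$ cache misses, there is one chain per clean element in a phase, and $\opt \ge \frac12\sum_i C_i$ by Equation~\ref{eqn:clean}, so a bound of $O(\log k)$ per chain gives the stated competitive ratio of the Marking algorithm. Fix a phase and call a stale element \emph{live} if it is among the $k$ distinct elements requested during the phase (equivalently, it will be requested again before the phase ends); of the $k$ elements in the cache when the phase begins, exactly the clean elements of the phase are \emph{not} live. I would introduce the potential $\Phi$ equal to the current number of live elements that are still unmarked and still in the cache. The structural fact I would establish first is that $\Phi$ is monotonically non-increasing over the whole phase (and never exceeds $k$): an element leaves the count when it is requested (hence marked) or evicted, and nothing ever rejoins the count, since to return to the cache an element must be requested, which marks it --- so the set of elements $\Phi$ counts only shrinks over time.

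Next I would analyze one link of a chain. At such a link an element $e$ is requested, is not in the cache (a miss), and a uniformly random unmarked element $y$ is evicted. The chain continues past this link precisely when $y$ is live; since $y$ is unmarked and in the cache, that is the same as ``$y$ is one of the $p := \Phi$ elements currently counted by the potential,'' and conditioned on the chain continuing, $y$ is uniform over those $p$ elements. The heart of the argument is the estimate on $\Phi$ at the \emph{next} link of the chain --- which occurs when $y$ is re-requested, at its first-appearance time $t'$ this phase: by monotonicity the elements counted by $\Phi$ then are a subset of the $p$ counted now, and each is still unmarked at time $t'$, hence has first appearance strictly later than that of $y$. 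So the value of $\Phi$ at the next link is at most $p$ minus the rank of $y$ (ordering the $p$ elements by first appearance), and since that rank is uniform in $\{1,\dots,p\}$, the potential drops in expectation from $p$ to at most $(p-1)/2$ from one link to the next.

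Given this per-link halving, I would finish with a routine calculation: letting $\Psi_i$ be the value of $\Phi$ at the $i$-th link of the chain if the chain is still alive there and $0$ otherwise, the halving gives $\E[\Psi_{i+1}] \le \tfrac12\E[\Psi_i]$, hence $\E[\Psi_i] \le k/2^i$; a chain can have an $(i{+}1)$-st link only if $\Psi_i \ge 1$, so the probability it has more than $i$ links is at most $\min(1,k/2^i)$, and summing over $i$ bounds its expected length by $O(\log k)$.

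The hard part will be isolating the right potential and checking that it is globally monotone. At any moment many chains are active simultaneously (each ``holding'' one element it has evicted but that has not yet been re-requested), so the pool of unmarked elements is shared and constantly shifting; a naive attempt to follow a single chain in isolation stumbles on the fact that the element a chain evicts is uniform only over the live elements \emph{currently in the cache}, not over all remaining live ones. Defining $\Phi$ to count exactly the live, unmarked, in-cache elements and observing that this set only ever loses members is what neutralizes whatever the other chains are doing; with that in place, the per-link decay estimate and the final summation are mechanical.
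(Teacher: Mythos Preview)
Your argument is correct (with one harmless slip: the non-live elements in the cache at the start of a phase are not the clean elements themselves but the $C_i$ elements from the previous phase that will \emph{not} be requested this phase; this does not affect anything, since you only use $\Phi\le k$).

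Your route is genuinely different from the paper's. The paper orders the stale elements $e_1,\dots,e_{k-1}$ of a phase by their first-arrival time and writes a backward recurrence for the expected remaining chain length $L_i$ once the chain reaches $e_i$: roughly, when $e_i$ arrives there are about $k-i$ unmarked elements, and the evicted one is uniform among the later $e_j$'s, giving $L_i = 1 + \frac{1}{k-i}\sum_{j>i} L_j$, which unwinds to a harmonic number and hence $O(\log k)$. This is short and gives the exact harmonic behavior, but it implicitly treats the pool of unmarked elements as consisting only of later stale elements and is silent about the interaction with other chains and with non-live elements. Your potential $\Phi$ (live, unmarked, in-cache) and its global monotonicity absorb all of that interaction cleanly: whatever other chains do can only shrink $\Phi$, so the per-link ``rank is uniform, hence $\Phi$ halves in expectation'' estimate goes through without tracking the rest of the system. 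The trade-off is that you get $O(\log k)$ via a Markov-plus-geometric tail bound rather than the tighter harmonic constant, but the argument is self-contained and more robust to the details the paper's sketch elides.
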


\begin{proof}
We need only consider stale elements in every phase, and there may be as many as $k-1$ of them. Order them by their arrival time, with $e_1$ arriving first, then $e_2$, and so on.  Denote by $L_i$ the length of the chain starting with element $e_i$. We can  write down the recurrence for $e_i$ as:
$$L_i = 1 + \frac{1}{k-i} \sum_{j = 1}^{k-1} L_j,$$
which solves to $L_0 = \Theta(\log k)$ when $L_{k-1} = 0$. 
\end{proof}

On the other hand, in \pmarker, when a clean element arrives, we evict the element predicted to arrive furthest in the future.  Let $c$ be a clean element that arrives at time $t_c$,  $s$ denote the element we chose to evict, and $t_s$ be the next time of arrival of $s$. Note that any stale element that arrives between $t_c$ and $t_s$ cannot increase the chain started at $c$. Therefore the only elements that can contribute to the growth of the chain are those who arrive after time $t_s$. But this is exactly in violation of our prediction, thus we can charge these cache misses to the error of the predictor. Let $inv_h(s)$ denote the set of elements that arrive after $s$ even though they were predicted by $h$ to arrive before. It is easy to extend Lemma \ref{lem:randommarker} to show that the length of the chain starting with $s$ is $\Theta(\log inv_h(s))$. 

%Observe that if the element predicted last is $f_{k-1}$, then there are no additional cache misses. On the other hand, if the element predicted to arrive last is $f_1$, then the expected length of the chain would be $\Theta(\log k)$. What gives us leverage is that in the latter case, the prediction is quite wrong, in fact, for {\em every} pair of elements $(f_1, f_i)$ the prediction ordered them in the incorrect order. The key is that we need to carry out the above analysis only for stale elements that arrive after the arrival of the element predicted to arrive last. 

%To make this precise, for a clean element $e_0$, let $last(e_0)$ be the  element predicted to arrive last at the time $e_0$ arrives, and $err(e_0)$ be the number of stale elements that  arrive after $last(e_0)$ on the true instance. We can extend the intuition above to show that the expected length of the chain at $e_0$ is  $O(\log err(e_0))$. 

To complete the analysis, we need to bound the number of inversions as a function of the accuracy of the predictor. 
For any two permutations, the total number of inversions and the $\ell_1$ distance of the elements are known to always be within a factor of two by the celebrated Diaconis-Graham inequality~\citep{DG}. The latter is also exactly $\eta(h)$ decomposed across phases. Further, since $\log$ is a concave function, to maximize the total length of all chains, we should partition errors equally among them. These two facts imply that the expected error of the above algorithm is $O(\log (\eta(h)/\opt))$. 

\subsection{Summary of Caching}
The caching problem is illustrative of the power of algorithms with predictions and the care that must be taken in designing them. We relied on the offline algorithm to identify the quantity that we wished to predict: the next appearance of every arriving element. We then proved that simply using this prediction as a proxy for the truth in the optimal offline algorithm allowed for pathological examples where the predictions led the algorithm astray. We then showed a different algorithm which, by using the predictions in a more careful manner, leads to a marked improvement in the competitive ratio over the na\"{i}ve way of using the predictor. In addition, we can show that even if the error is very large, we can guarantee performance within a constant factor of the standard marking algorithm. (See Exercise~\ref{exer:caching}.) 
Finally, as ~\citet{LykourisV18} showed, these gains are not just theoretical; even with off-the-shelf prediction models {\pmarker} consistently outperformed standard methods like the Least Recently Used (LRU) policy. 

\section{Scheduling with Predictions}

We consider the problem of scheduling jobs on a single machine to minimize the total flow time. One of the key points is that if job times are known, the simple greedy algorithm of Shortest Remaining Processing Time (SRPT) is optimal for this objective. Here we consider the potential of strategies such as SRPT in the context of scheduling and for queueing systems, where arrivals occur over time, but where the job times are only predicted, instead of known exactly.

\subsection{A Simple Model with Predictions}

We start with a very simple example.  Suppose we have $n$ jobs $j_1,\ldots,j_n$, each of which is either short or long.  Short jobs require time $s$ to process and long jobs require time $\ell > s$ to process.  Jobs are all available at time 0, and they are to be ordered and then
processed sequentially.  When the job times are known, shortest job first minimizes the total waiting time over all jobs.
If there are $n_s$ short
jobs and $n_\ell$ long jobs, it is easy to check that the average waiting time is
\begin{align*}
\frac{1}{n} \left ( n_s \frac{n_s -1}{2} s + n_\ell \frac{n_\ell -1}{2} \ell + n_\ell n_s s \right).
\end{align*}
If one has no information about the job times, then one
might randomly order the jobs, in which case the
expected waiting time over all jobs is
\begin{align*}
\frac{1}{n} \left ( n_s \left( \frac{n_s -1}{2} s + \frac{n_\ell}{2} \ell \right )
+ n_\ell \left( \frac{n_s}{2} s + \frac{n_\ell -1}{2} \ell \right ) \right ).
\end{align*}
Finally, suppose we have an algorithm that can predict a job's type.  We assume short jobs are misclassified as
long jobs with some probability $p$ and long jobs are misclassified as
short jobs with some probability $q$.  The natural approach would be to use
{\em shortest-predicted-job-first}; that is, we apply
shortest-job-first based on the predictions.  Some case arithmetic shows that the expected waiting time is then

{\scriptsize
  \setlength{\abovedisplayskip}{6pt}
  \setlength{\belowdisplayskip}{\abovedisplayskip}
  \setlength{\abovedisplayshortskip}{0pt}
  \setlength{\belowdisplayshortskip}{3pt}
\begin{align*}
& \frac{1}{n} \bigg(  (1-p) n_s \left( \frac{(1-p)(n_s -1)}{2} s + \frac{q n_\ell}{2} \ell \right) + pn_s \left( (1-p)(n_s -1) s + \frac{p(n_s -1)}{2} s
+ \frac{(1-q) n_\ell}{2}\ell + q n_\ell\ell \right ) \\
& +
(1-q) n_\ell \left( \frac{(1-q)(n_\ell -1)}{2} \ell + q (n_\ell-1) \ell
 + \frac{pn_s}{2} s + (1-p) n_s s \right) +
  q n_\ell \left( \frac{q(n_\ell-1)}{2} \ell
  + \frac{(1-p)n_s}{2} s \right) \bigg).
 \end{align*}
}%
With these expressions, one can determine the gain from using predictions over randomly ordering jobs, and the loss from using predictions in place of exact information. \citet{mitzenmacher2019scheduling}
suggests that we might also consider the {\em ratio}
between the expected waiting time with imperfect information and the expected
waiting time with perfect information.  \citet{mitzenmacher2019scheduling} further suggests that for any algorithm where it makes sense to use predicted information in
place of exact information one can consider this ratio, which is there referred to as the price of misprediction, using the following definition:
\begin{definition}
Let $M_A(Q;I)$ be the value of some measure (such as the expected
waiting time) for a system $Q$ given information $I$ about the system
using algorithm $A$, and let $M_A(Q;P)$ be the value of that metric
using predicted information $P$ in place of $I$ when using algorithm $A$.
Then the {\em price of misprediction} is defined as $M_A(Q;I)/M_A(Q;P)$.
\end{definition}
Notice here that (unlike many other uses of the ``price of'' language in algorithm analysis) the denominator is not necessarily an optimal algorithm, but the corresponding algorithm with exact information.  (One could, of course,
also compare against an optimal algorithm, as we have seen elsewhere in this chapter.) 

\subsection{More General Job Service Times}

We can consider a more general model where a job's actual and predicted time for service are real-valued random variables.  A natural probabilistic model is to suppose that the job sizes are governed by some distribution, and correspondingly, for each possible service time $x$, the output of the predictor $y$ is governed by some distribution that depends only on $x$.  For example, we might model the prediction $y$ as the value $x$ with some additional random noise, where the distribution of the noise might depend on $x$.  Equivalently, we can describe jobs according to a density function $g(x,y)$, giving the density for a job that has service time $x$ and predicted service time $y$. (For convenience we assume that $g(x,y)$ is ``well-behaved'' throughout, so that it is continuous and all necessary derivatives exist;  the analysis can be readily modified to handle point masses or other discontinuities in the distribution.)  This model makes some assumptions, most notably that each job corresponds to an independent instantiation of this density function.  However, it does seem sensible to model a machine learning algorithm that has been trained on lots of data as providing an estimated service time that corresponds to a conditional distribution based on the actual service time, as is done here, as long as the future jobs we are going to see can be thought of as coming from the same distribution as the jobs we used for training -- that is, roughly speaking, if the future is going to look like the past.

Again, we assume that all jobs are given at time 0, and we simply order the jobs according to the shortest predicted job first.  We let
$f_s(x) = \int_{y=0}^\infty g(x,y) \, dy$ be the corresponding density
function for the service time, and
$f_p(y) = \int_{x=0}^\infty g(x,y) \, dx$ be the corresponding density
function for the predicted service time.
If there are $n$ total jobs, the expected waiting time for a job using shortest job first given full information is given by
\begin{align*}
(n-1) \int_{x=0}^\infty f_s(x) \left (\int_{z=0}^x z f_s(z) \,dz \right ) dx,
\end{align*}
while the expected waiting time for a job using predicted information using shortest predicted job first is given by
\begin{align*}
(n-1) \int_{y=0}^\infty f_p(y) \left (\int_{x=0}^\infty \int_{z=0}^y x g(x,z) \, dz \, dx \right ) dy.
\end{align*}
In words, in the full information case, given the service time for a job, we determine its expected waiting time from each other job by taking the expectation conditioned on the other job having a smaller service time.  In the predicted information case, to compute the expected
waiting time for a job given its predicted service time, 
we determine its expected waiting time from each other job by taking the expectation based on the other job's actual service time, conditioned on the other job having a smaller predicted service time than the original job.

In this case, the price of misprediction is given by the ratio
\begin{align}
\label{eq:pom1}
\frac{\int_{y=0}^\infty f_p(y) \left (\int_{x=0}^\infty \int_{z=0}^y x g(x,z) \, dz \, dx \right ) dy}{\int_{x=0}^\infty f_s(x) \left (\int_{z=0}^x z f_s(z) \,dz \right ) dx};
\end{align}
while this is not the simplest of expressions, given $g(x,y)$ it can be numerically evaluated.  As an interesting albeit not necessarily realistic example, suppose that jobs have service times that are exponentially distributed with mean 1, but the service time prediction for a job with actual service time $x$ is exponential with mean $x$, so that the mean of the prediction is correct but the prediction itself can be significantly inaccurate.  It can be shown that the price of misprediction in this case is $4/3$; this is given as Exercise~\ref{exer:exponential}.

\subsection{Scheduling Queues}

This type of analysis can be extended, with some more involved work, to the case of queues.  In the queueing setting, we still just have one machine, but jobs both enter for service and leave after finishing service over time, and we typically first look at the average time in the system when considering performance.  For example, in standard queueing theory, the prototypical queue is known as the {M/M/1} queue, where arrivals are a Poisson process of rate $\lambda < 1$, service times  are independently and identically exponentially distributed with mean 1, and there is a single server serving the customers.  (The``M'' in the {M/M/1} queue stands for memoryless.)  One of the fundamental results in queueing theory is that the expected time a customer spends  waiting for and obtaining service in equilibrium  in an  {M/M/1} queue with First Come First Served (FCFS) scheduling (also called First In First Out (FIFO)) is given by ${\nicefrac{1}{(1-\lambda)}}$.  In this section we consider queues with Poisson arrivals but general service time distributions, not just exponential.  

If one knows the service time for a job, one can try to schedule better than FCFS.  Shortest Job First (SJF) is the non-preemptive strategy that schedules the queued job with the shortest service time when a job completes.  Preemptive Shortest Job First (PSJF) acts similarly, but will preempt a running job if new job with a smaller service time arrives.  Shortest Remaining Processing Time (SRPT) will instead schedule and preempt jobs based on their remaining processing time instead of their service time.  

In \citet{mitzenmacher2019scheduling}, these strategies are considered in the setting where one has predicted service times instead of actual service times, leading to the strategies
Shortest Predicted Job First (SPJF), Preemptive Shortest Predicted Job First (PSPJF), and Shortest Predicted Remaining Processing Time (SPRPT).  Equations for all three strategies are provided under the assumption that there is a joint density distribution $g(x,y)$ for jobs with service time $x$ and predicted service time $y$, and that each job independently yields predicted and actual service times from this distribution.

For example, comparing SJF and SPJF, we first set up the following notation.
Let $f_s(x) = \int_{y=0}^\infty g(x,y) \, dy$ and
$f_p(y) = \int_{x=0}^\infty g(x,y) \, dx$ be the corresponding 
service and predicted service density functions.  
Finally, the quantity $\rho_x = \lambda \int_{t=0}^x tf_s(t) \, dt$ is the rate of work entering the queue from jobs with service time at most $x$,
and $\rho'_y = \lambda \int_{t=0}^y \int_{x=0}^\infty g(x,t) x \, dx \, dt$ is the corresponding rate of work entering the queue from jobs with predicted service at most $y$.  

For SJF, it is known that $W(x)$, the time spent waiting in the queue (not being served)
for jobs with service time
$x$, in the steady state satisfies
\begin{align*}
\E[W(x)] = \frac{\rho \E[S^2]}{2\E[S] \left ( 1 - \rho_x \right )^2}.
\end{align*}
Note that the waiting time for a job with service time $x$ depends on the general service distribution but also specifically on the work from jobs with service time at most $x$, as one might expect.   
The overall expected time waiting in a queue, which
we denote by $\E[W]$, is then simply
\begin{align*}
\E[W] = \int_{x=0}^\infty f(x) \E[W(x)] \, dx.
\end{align*}
It turns out that for SPJF, a similar analysis to that used to derive the performance equations for SJF applies.
If we let
$W'(y)$ be the distribution of time spent waiting in the queue for a job with predicted service time $y$
in the steady state,
then
\begin{align*}
\E[W'(y)] = \frac{\rho \E[S^2]}{2\E[S] \left ( 1 - \rho'_y \right )^2}.
\end{align*}
The price of misprediction for the time waiting in queue for SJF/SPJF is then expressed as 
\begin{align*}
\frac{\int_{y=0}^\infty \frac{f_p(y)}{(1-\rho'_y)^2} \, dy}
{\int_{x=0}^\infty \frac{f_s(x)}{(1-\rho_x)^2} \, dx} .
\end{align*}

Similar analyses can be done for PSJF/PSPJF and SRPT/SPRPT, although the resulting expressions are more complicated.  

Simulations show that even fairly weak predictors can provide significant performance gains for queues under high load (that is, as $\lambda$ gets close to 1), as FIFO queues relatively frequently stack short jobs behind a long job, and this is a primary reason for long expected waiting times.  Predictors that simply keep long jobs behind short jobs most of the time therefore greatly improve the expected waiting time over all jobs.  For example, a predictor with a multiplicative error can do quite well.  
Figure~\ref{fig:figb} provides an example with $\lambda = 0.95$ and two types of service distributions: exponential with mean 1, and a Weibull distribution with cumulative distribution $1-e^{-\sqrt{2x}}$.  (The
Weibull distribution also has mean 1, but is more heavy-tailed, so longer jobs occur with higher probability.)  The results are averaged over 1000 trials over a time period of 1 million time units, where each trial averages the time in system for jobs that complete after the first $100,000$ time units. A job with service time $x$ has a predicted service
time that is uniform over $[(1-\alpha)x,(1+\alpha)x]$ for a parameter $\alpha$;  we try $\alpha = j/10$ for integer $j$ from 0 to 9.  We observe that performance degrades gracefully with $\alpha$, and is much better than without predictions, where the steady state average time in the system is 20 for the exponential distribution and 58 for the Weibull distribution.

\begin{figure}
    \centering
    \includegraphics[width=4.0in]{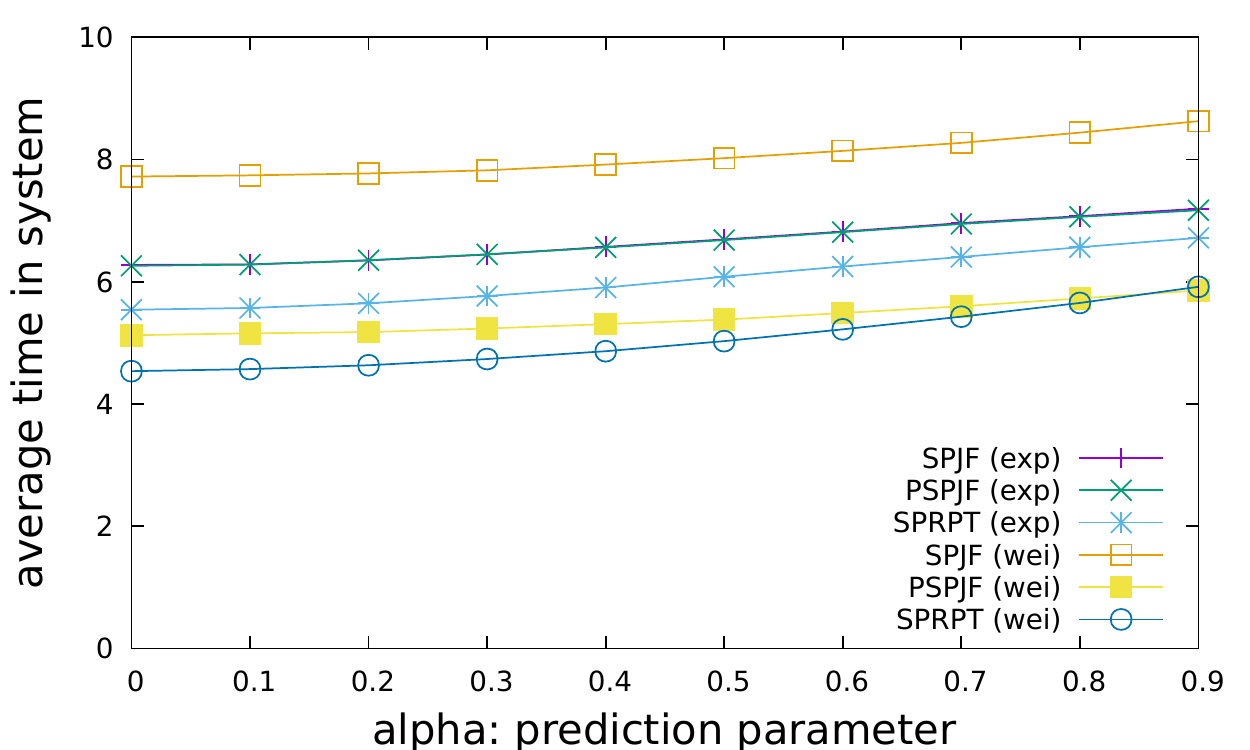}
    \caption{Results from simulations at $\lambda = 0.95$ for exponential and Weibull distributions.
A job with service time $x$ has predicted service time uniform over $[(1-\alpha)x,(1+\alpha)x]$.
Performance degrades gracefully with $\alpha$.}
    \label{fig:figb}
\end{figure}

\section{Bibliographic Notes}

%As mentioned in Chapter 24, 
The use of advice to assist online algorithms has been studied in the past \citep{boyar2016online}.  But previous work has focused on minimizing the number of advice bits from omniscient sources to achieve optimal or near-optimal competitive ratios.  The motivation of the work in online algorithms using learning-based predictions more closely mirrors the use of machine learning in practice, focusing on improvements in the competitive ratio that can arise with realistic advice.   

The idea of learning in order to improve algorithms' performance, especially in the realm of online algorithms, has appeared in some works in the past. For instance,~\citet{DevenurHayes} and \citet{VeeSV} explored how predictions can be used to obtain nearly optimal online matching bounds, while~\citet{ColeR} and \citet{MedinaV17} showed how to learn from samples to maximize revenue in auction settings. In parallel~\citet{kraska2018case} showed that these endeavors are not simply theoretical, building a system that used machine learning to improve retrieval speed for index data structures.   

A formal model of learning with predictions, including the notions of $\alpha$-consistency and $\beta$-robustness, was presented by~\citet{LykourisV18}. They were also the first ones to analyze this setting for the caching problem. The analysis we presented here is due to \citet{Rohatgi}. Additionally~\citet{purohit2018improving} demonstrated explicit trade-offs between these two concepts in the context of ski-rental and online scheduling. 

A good general reference for queueing theory, including derivations for SJF and SRPT with exact information, is 
\citet{harchol2013performance}.  

In scheduling for queues, some works have looked at the effects of using imprecise information for load balancing in multiple queue settings.  For example, \citet{mitzenmacher2000useful} considers using old load information to place jobs in the context of the power of two choices.
For single queues, \citet{wierman2008scheduling} look at variations of SRPT and SJF with inexact job sizes, bounding the performance gap based on bounds on how inexact the estimates can be. Dell'Amico, Carra, and Michardi
empirically study scheduling policies for
queueing systems with estimated sizes \citep{dell2015psbs}.
As mentioned, \cite{purohit2018improving}
% Purohit, Svitkina, and Kumar
specifically looked at scheduling with predictions in the standard online setting, where they considered variants of shortest predicted processing time that
yield good performance in terms of the competitive ratio, with the performance
depending on the accuracy of the predictions.

The Count-Min Sketch \citep{cormode2005improved} and the Count-Sketch \citep{charikar2002finding} are well known data structures for finding heavy hitters in data streams, and have found many additional applications.  

Bloom filters were originally developed by \citet{bloom1970space}, and have proven useful for a number of applications \citep{broder2004network}.  Learned Bloom filters were originally described by \citet{kraska2018case}, where other additional possible examples of using learning to improve index data structures were proposed.

\bibliography{chapter}

\begin{thebibliography}{24}
\expandafter\ifx\csname natexlab\endcsname\relax\def\natexlab#1{#1}\fi
\expandafter\ifx\csname selectlanguage\endcsname\relax
  \def\selectlanguage#1{\relax}\fi

\bibitem[\protect\citename{Aamand {et~al.}, }2019]{aamand2019learned}
Aamand, Anders, Indyk, Piotr, and Vakilian, Ali. 2019.
\newblock (Learned) Frequency Estimation Algorithms under Zipfian Distribution.
\newblock {\em arXiv preprint arXiv:1908.05198}.

\bibitem[\protect\citename{Belady, }1966]{belady1966study}
Belady, Laszlo~A. 1966.
\newblock A study of replacement algorithms for a virtual-storage computer.
\newblock {\em IBM Systems journal}, {\bf 5}(2), 78--101.

\bibitem[\protect\citename{Bloom, }1970]{bloom1970space}
Bloom, Burton~H. 1970.
\newblock Space/time trade-offs in hash coding with allowable errors.
\newblock {\em Communications of the ACM}, {\bf 13}(7), 422--426.

\bibitem[\protect\citename{Boyar {et~al.}, }2016]{boyar2016online}
Boyar, Joan, Favrholdt, Lene~M, Kudahl, Christian, Larsen, Kim~S, and
  Mikkelsen, Jesper~W. 2016.
\newblock Online algorithms with advice: a survey.
\newblock {\em Acm Sigact News}, {\bf 47}(3), 93--129.

\bibitem[\protect\citename{Broder and Mitzenmacher, }2004]{broder2004network}
Broder, Andrei, and Mitzenmacher, Michael. 2004.
\newblock Network applications of bloom filters: A survey.
\newblock {\em Internet mathematics}, {\bf 1}(4), 485--509.

\bibitem[\protect\citename{Charikar {et~al.}, }2002]{charikar2002finding}
Charikar, Moses, Chen, Kevin, and Farach-Colton, Martin. 2002.
\newblock Finding frequent items in data streams.
\newblock {Pages  693--703 of:} {\em International Colloquium on Automata,
  Languages, and Programming}.
\newblock Springer.

\bibitem[\protect\citename{Cole and Roughgarden, }2014]{ColeR}
Cole, Richard, and Roughgarden, Tim. 2014.
\newblock The sample complexity of revenue maximization.
\newblock {Pages  243--252 of:} {\em Symposium on Theory of Computing, {STOC}
  2014, New York, NY, USA, May 31 - June 03, 2014}.

\bibitem[\protect\citename{Cormode and Muthukrishnan,
  }2005]{cormode2005improved}
Cormode, Graham, and Muthukrishnan, Shan. 2005.
\newblock An improved data stream summary: the count-min sketch and its
  applications.
\newblock {\em Journal of Algorithms}, {\bf 55}(1), 58--75.

\bibitem[\protect\citename{Dell'Amico {et~al.}, }2015]{dell2015psbs}
Dell'Amico, Matteo, Carra, Damiano, and Michiardi, Pietro. 2015.
\newblock PSBS: Practical size-based scheduling.
\newblock {\em IEEE Transactions on Computers}, {\bf 65}(7), 2199--2212.

\bibitem[\protect\citename{Devanur and Hayes, }2009]{DevenurHayes}
Devanur, Nikhil~R., and Hayes, Thomas~P. 2009.
\newblock The adwords problem: online keyword matching with budgeted bidders
  under random permutations.
\newblock {Pages  71--78 of:} {\em Proceedings 10th {ACM} Conference on
  Electronic Commerce (EC-2009), Stanford, California, USA, July 6--10, 2009}.

\bibitem[\protect\citename{Diaconis and Graham, }1977]{DG}
Diaconis, P., and Graham, R.L. 1977.
\newblock Spearman's footrule as a measure of disarray.
\newblock {\em J. Roy. Statist. Soc. Ser. B}, {\bf 39}(2), 262--268.

\bibitem[\protect\citename{Fiat {et~al.}, }1991]{FiatKLMSY91}
Fiat, Amos, Karp, Richard~M., Luby, Michael, McGeoch, Lyle~A., Sleator,
  Daniel~Dominic, and Young, Neal~E. 1991.
\newblock Competitive Paging Algorithms.
\newblock {\em J. Algorithms}, {\bf 12}(4), 685--699.

\bibitem[\protect\citename{Harchol-Balter, }2013]{harchol2013performance}
Harchol-Balter, Mor. 2013.
\newblock {\em Performance modeling and design of computer systems: queueing
  theory in action}.
\newblock Cambridge University Press.

\bibitem[\protect\citename{Hsu {et~al.}, }2019]{hsu2018learning}
Hsu, Chen-Yu, Indyk, Piotr, Katabi, Dina, and Vakilian, Ali. 2019.
\newblock Learning-Based Frequency Estimation Algorithms.
\newblock {In:} {\em 7th International Conference on Learning Representations}.

\bibitem[\protect\citename{Kraska {et~al.}, }2018]{kraska2018case}
Kraska, Tim, Beutel, Alex, Chi, Ed~H, Dean, Jeffrey, and Polyzotis, Neoklis.
  2018.
\newblock The case for learned index structures.
\newblock {Pages  489--504 of:} {\em Proceedings of the 2018 International
  Conference on Management of Data}.
\newblock ACM.

\bibitem[\protect\citename{Lykouris and Vassilvitskii, }2018]{LykourisV18}
Lykouris, Thodoris, and Vassilvitskii, Sergei. 2018.
\newblock Competitive Caching with Machine Learned Advice.
\newblock {Pages  3302--3311 of:} {\em Proceedings of the 35th International
  Conference on Machine Learning, {ICML} 2018, Stockholmsm{\"{a}}ssan,
  Stockholm, Sweden, July 10-15, 2018}.

\bibitem[\protect\citename{Medina and Vassilvitskii, }2017]{MedinaV17}
Medina, Andres~Mu{\~{n}}oz, and Vassilvitskii, Sergei. 2017.
\newblock Revenue Optimization with Approximate Bid Predictions.
\newblock {Pages  1858--1866 of:} {\em Advances in Neural Information
  Processing Systems 30: Annual Conference on Neural Information Processing
  Systems 2017, 4-9 December 2017, Long Beach, CA, {USA}}.

\bibitem[\protect\citename{Mitzenmacher, }2000]{mitzenmacher2000useful}
Mitzenmacher, Michael. 2000.
\newblock How useful is old information?
\newblock {\em IEEE Transactions on Parallel and Distributed Systems}, {\bf
  11}(1), 6--20.

\bibitem[\protect\citename{Mitzenmacher, }2018]{mitzenmacher2018model}
Mitzenmacher, Michael. 2018.
\newblock A model for learned bloom filters and optimizing by sandwiching.
\newblock {Pages  464--473 of:} {\em Advances in Neural Information Processing
  Systems}.

\bibitem[\protect\citename{Mitzenmacher, }2019]{mitzenmacher2019scheduling}
Mitzenmacher, Michael. 2019.
\newblock Scheduling with Predictions and the Price of Misprediction.
\newblock {\em arXiv preprint arXiv:1902.00732}.

\bibitem[\protect\citename{Purohit {et~al.}, }2018]{purohit2018improving}
Purohit, Manish, Svitkina, Zoya, and Kumar, Ravi. 2018.
\newblock Improving online algorithms via ml predictions.
\newblock {Pages  9661--9670 of:} {\em Advances in Neural Information
  Processing Systems}.

\bibitem[\protect\citename{Rohatgi, }2020]{Rohatgi}
Rohatgi, Dhruv. 2020.
\newblock Near-Optimal Bounds for Online Caching with Machine Learned Advice.
\newblock {In:} {\em Symposium on Discrete Algorithms (SODA)}.

\bibitem[\protect\citename{Vee {et~al.}, }2010]{VeeSV}
Vee, Erik, Vassilvitskii, Sergei, and Shanmugasundaram, Jayavel. 2010.
\newblock Optimal online assignment with forecasts.
\newblock {Pages  109--118 of:} {\em Proceedings 11th {ACM} Conference on
  Electronic Commerce (EC-2010), Cambridge, Massachusetts, USA, June 7-11,
  2010}.

\bibitem[\protect\citename{Wierman and Nuyens, }2008]{wierman2008scheduling}
Wierman, Adam, and Nuyens, Misja. 2008.
\newblock Scheduling despite inexact job-size information.
\newblock {Pages  25--36 of:} {\em ACM SIGMETRICS Performance Evaluation
  Review},  vol. 36.
\newblock ACM.

\end{thebibliography}
\bibliographystyle{cambridgeauthordate}

%\bibliography{chap4}
%\bibliographystyle{cambridgeauthordate}

\section*{Exercises}
%\begin{exercises}
\begin{enumerate}
\item \label{exer:skirentla}
Prove the competitive ratio bound given
in equation~\ref{eqn:skirental}
for the ski rental with predictions algorithm.   
\item \label{exer:caching}
Consider the caching problem, and suppose we have two data-dependent eviction algorithms. For an input $x$, one of them has competitive ratio $a(x)$ while the other has ratio $b(x)$. Develop an algorithm that for every input $x$ has competitive ratio $O(\min(a(x), b(x)))$. 
\item \label{exer:exponential} 
Consider the setting of equation~\ref{eq:pom1}, where job sizes are exponentially distributed with mean 1, and a job with mean service time $x$ has a predicted service time that is itself exponentially distributed with mean $x$.  Show via numerical evaluation or integration (perhaps using a software package for evaluating integrals) that the ``price of misprediction'' in this case is $4/3$.
\item \label{exer:simulation}
Write a simulation to study one of the problems discussed in the chapter.  For example, you could write a simulation for a queue that uses predicted service times, and use it to explore how the service time distribution and the quality of the prediction affect the average time spent waiting in the queue.  Or you could implement an Count-Min sketch and simulate a predictor for heavy hitter elements, and use it to explore how the accuracy of the sketch improves with the quality of the prediction or varies with how skewed the frequency distribution of items is.  Your simulation can use an actual learned function as a predictor, or you could use a synthetic prediction (by for example adding noise in some specified way to the ground truth to obtain a prediction).  
%\end{exercises}
\end{enumerate}

\end{document}